\newcommand{\noun}[1]{\textsc{#1}}
\newtheorem{thm}{\protect\theoremname}
  \theoremstyle{plain}
  \newtheorem{prop}[thm]{\protect\propositionname}
  \newtheorem{conj}[thm]{\protect\conjecturename}
   \newtheorem{lem}[thm]{\protect\lemmaname}
   \newtheorem{rmk}[thm]{\protect\remarkname}
  \providecommand{\definitionname}{Definition}
  \providecommand{\propositionname}{Proposition}
  \providecommand{\theoremname}{Theorem}
  \providecommand{\corollaryname}{Corollary}
  \providecommand{\conjecturename}{Conjecture}
   \providecommand{\lemmaname}{Lemma}
   \providecommand{\remarkname}{Remark}
\DeclareMathOperator{\arcsinh}{arcsinh}
\DeclareMathOperator{\arctanh}{arctanh}
\def\beq{\begin{equation}}
\def\eeq{\end{equation}}
\def\be{\begin{equation}}
\def\ee{\end{equation}}
\def\bea{\begin{eqnarray}}
\def\eea{\end{eqnarray}}
\newcommand{\eq}[1]{Eq.~(\ref{#1})}
\newcommand{\fig}[1]{Fig.~\ref{#1}}
\newcommand{\clight}{c_{\mathrm{light}}}
\begin{document}
%=====TITLE AND ABSTRACT===============================================================

\title{Inflationary resolution of the initial singularity}
\author{Damien A. Easson\footnote{easson@asu.edu}}
\affiliation{
Department of Physics \& Beyond Center for Fundamental Concepts in Science,  
Arizona State University, Tempe, AZ 85287-1504, USA}
\author{Joseph E. Lesnefsky\footnote{jlesnefs@asu.edu}}
\affiliation{ 
Beyond Center for Fundamental Concepts in Science,  
Arizona State University, Tempe, AZ 85287-1504, USA}

\date{\today}

\begin{abstract}
The inflationary paradigm has transformed our understanding of the early universe; yet most inflationary models are considered geodesically past-incomplete, suggesting a beginning of time or a primordial Big Bang singularity. The Borde–Guth–Vilenkin (BGV) theorem is often cited as demonstrating that all eternally inflating spacetimes must be past-incomplete. Utilizing a new theorem establishing geodesic completeness in generalized cosmologies, we present a simple, explicit class of inflationary solutions that are smooth, nonsingular, and geodesically complete for all time, including into the past. These models exhibit localized NEC violation but remain globally well-behaved in both temporal directions. The NEC violation is confined and controlled: the averaged null energy condition (ANEC) is satisfied in the strongest sense, while violations of smeared null energy conditions (SNEC) are uniformly bounded and become nonnegative under sufficiently wide smearings. Our results suggest that eternal inflation can arise from controlled NEC-violating dynamics, offering a new, nonsingular, and past-eternal picture of the universe.
\end{abstract}
\maketitle
%\color{NavyBlue}
%\tableofcontents
%\newpage
%\section{Introduction}
%\noindent
% \textbf{Did the universe have a beginning? A widely held belief rooted in the Borde–Guth–Vilenkin (BGV) theorem suggests that all expanding or inflating cosmologies must be geodesically incomplete--implying a singular origin. In this work, the authors overturn this narrative by presenting a smooth, nonsingular, and eternally inflating cosmological solution that is geodesically complete in both time directions. Leveraging a new theorem that directly solves the geodesic equations for generalized FRW spacetimes, they establish precise conditions under which past-complete inflation is not only possible, but inevitable in dynamic, nonsingular universes. While their model mildly violates the null energy condition, it satisfies stronger averaged and smeared energy conditions, indicating controlled and physically plausible departures from classical constraints. This result challenges foundational assumptions about the necessity of a Big Bang singularity and opens the door to new classes of past-eternal inflationary cosmologies within standard general relativity.}
\emph{Introduction}--The inflationary universe paradigm is a cornerstone of modern cosmology~\cite{Guth:1980zm,Linde:1981mu,Albrecht:1982wi}. A prevailing view asserts that inflationary scenarios cannot be past-eternal even at the classical level, a conclusion drawn independent of the energy conditions involved. Specifically, the general notion stems largely from the renowned work of Borde, Guth and Vilenkin (BGV) \cite{Borde:2001nh}, who stated that a cosmological model which is inflating--or just expanding sufficiently fast--must be incomplete in null and timelike past directions. This belief has led to the strong assertion that inflationary models require new physics beyond inflation itself to describe the past boundary of the inflating region, and even to the broader view that, not only did inflation have a beginning, but  the universe itself must have originated from a definite beginning.

In this Letter, we challenge this perspective by explicitly constructing geodesically-complete eternal inflationary models. More broadly, we show that all non-trivial geodesically-complete Friedmann Robertson Walker (FRW) spacetimes \it necessarily \rm require an epoch of accelerated expansion. This finding underscores the critical role of inflation-like dynamics in ensuring geodesic completeness.~\footnote{Here and throughout, ``non-trivial FRW'' refers to cosmologies with a genuinely evolving (nonstatic) scale factor. 
Static spacetimes, such as Minkowski or the Einstein static universe, are therefore excluded by assumption and do not constitute counterexamples to the propositions discussed in this Letter.}

The price of this eternal inflation in General Relativity (GR) is a period of null energy condition (NEC) violation. However, in positively curved FRW universes we construct nonsingular, geodesically complete, eternally inflating spacetimes in which this violation is tightly controlled: the averaged null energy condition (ANEC) is satisfied in the strongest sense, while the smeared null energy condition (SNEC) is obeyed as a quantum inequality, with any negative smeared averages bounded and vanishing under sufficiently wide smearings.
Further details concerning inflationary, as well as geodesically--complete bouncing and loitering models, which may require only an arbitrarily short period of accelerated expansion, are discussed in our companion work \cite{Easson:2024fzn}, (see also \cite{Burwig:2025hrr}).

\emph{Eternal inflating universe}--We begin with a detailed analysis of the model introduced in 
\cite{Lesnefsky:2022fen}, having scale factor:
\begin{eqnarray}\label{aplusc}
    a(t) = a_0 \exp[2 t/\alpha] + c \,,
\end{eqnarray}
for constants $a_0$, $\alpha$ and $c$. We refer to this model as the ``plus c" model. For this scale factor, $2 \alpha^{-1} \neq H = \dot{a}/a$; and
$c>0$ is required to construct a geodesically complete spacetime.\footnote{An earlier model which shares some feature of the above was discussed in \cite{Ellis:2002we}. As we shall discover, the constant $c>0$ plays the role of a
nonsingularity regulator. It fixes the minimal radius of the universe,
$a_{\min}=c$, so that the spatial sections never collapse to zero size. In
the $k=1$ case the past limit is the Einstein--static universe
$\mathbb{R}\times S^3(c)$, with curvature scale set by $c$. Physically, $c$
encodes the minimal curvature radius of the nonsingular past state, and
ensures that the ANEC is satisfied in the strongest
sense.
} 

We use natural units with $\hbar=\clight=1$. The symbol $c$ in \eq{aplusc} is a constant controlling the minimum scale factor, and $\alpha$ is an arbitrary-scale parameter with dimensions of Mass$^{-1}$. We express all quantities in reduced Planck units by setting 
the Planck mass $M_{pl}= 1/\sqrt{8 \pi G}$=1.
% We adopt Natural units with $c = \hbar =1$, and $\alpha$ is an arbitrary-scale parameter with dimensions of Mass$^{-1}$. We further express all quantities in reduced Planck units by further setting 
% the Planck mass $M_{pl}=1$, where $M_{pl}= 1/\sqrt{8 \pi G}$.
In what follows, including plot parameters, we take $a_0 = c = 1/2$, and FRW spatial curvature $k=1$. The special role of positive spatial curvature in enabling geodesically complete cosmologies consistent with ANEC is analyzed in detail in~\cite{Burwig:2025hrr}. With these values it is  easy to show \eq{aplusc} may equivalently be expressed as
\begin{equation}
    a(t) = \exp[t/\alpha] \cosh[t/\alpha] \,.
\end{equation}

The Hubble parameter $H = \dot a/a$ is given by:

\begin{equation}
    H = \frac{2 a_0 e^{\frac{2t}{\alpha}}}{(c + a_0 e^{\frac{2t}{\alpha}}) \alpha} \,.
\end{equation}
This is an example of an eternally inflating spacetime, as is easily seen from the eternal positive acceleration 

\begin{equation}
  \frac{\ddot a}{a} = \frac{4}{\alpha^2} \left(1 - \frac{c}{c + a_0 e^{\frac{2t}{\alpha}}}\right) \,.
\end{equation}
\begin{figure}[H]
\centering
  \includegraphics[width=1\linewidth]{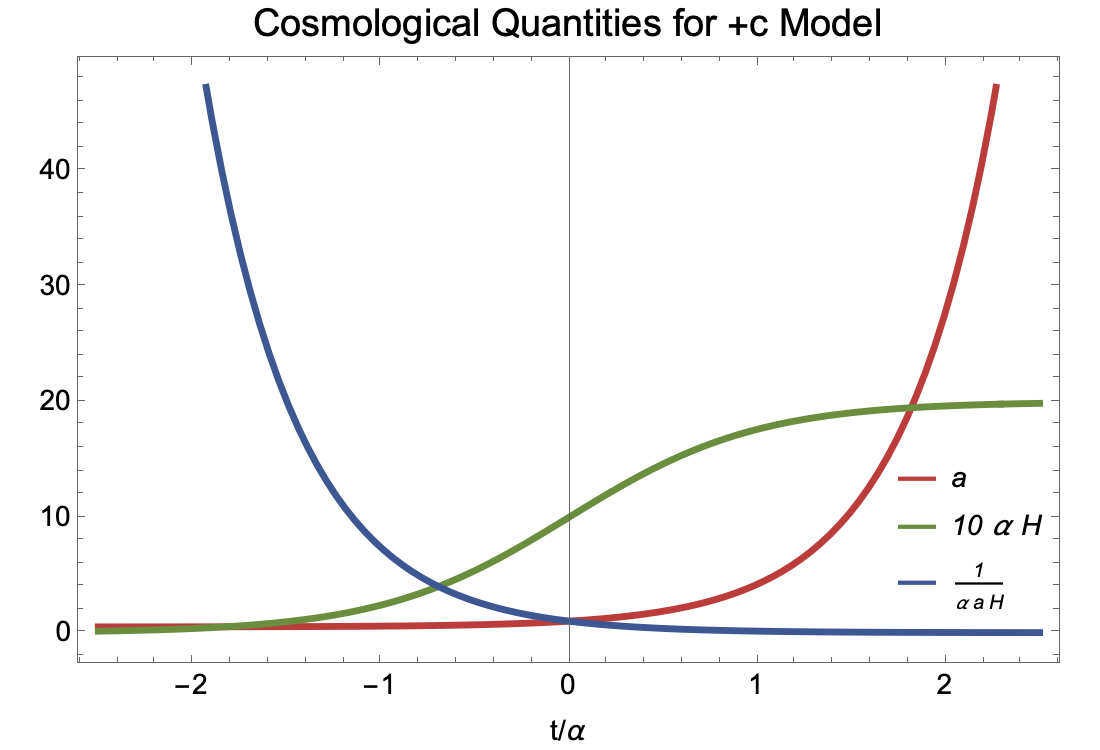}
  \captionof{figure}{Evolutionary behavior of cosmological parameters.
 %  $\mathbb{R}^1_1 \times_{a_0 \exp[2 t/\alpha] + c} \mathbb{R}^3$ with $\alpha=1$  and $a_0 = c = \nicefrac{1}{2}$. T
The scale factor $a\left( t \right)$ is plotted in red. The dimensionless Hubble parameter $\alpha H$ multiplied by 10 is in green. The dimensionless co-moving Hubble radius, decreasing for all time, is in blue. }
  \label{ceplusc}
\end{figure}
\begin{figure}[H]
\centering
  \includegraphics[width=1\linewidth]{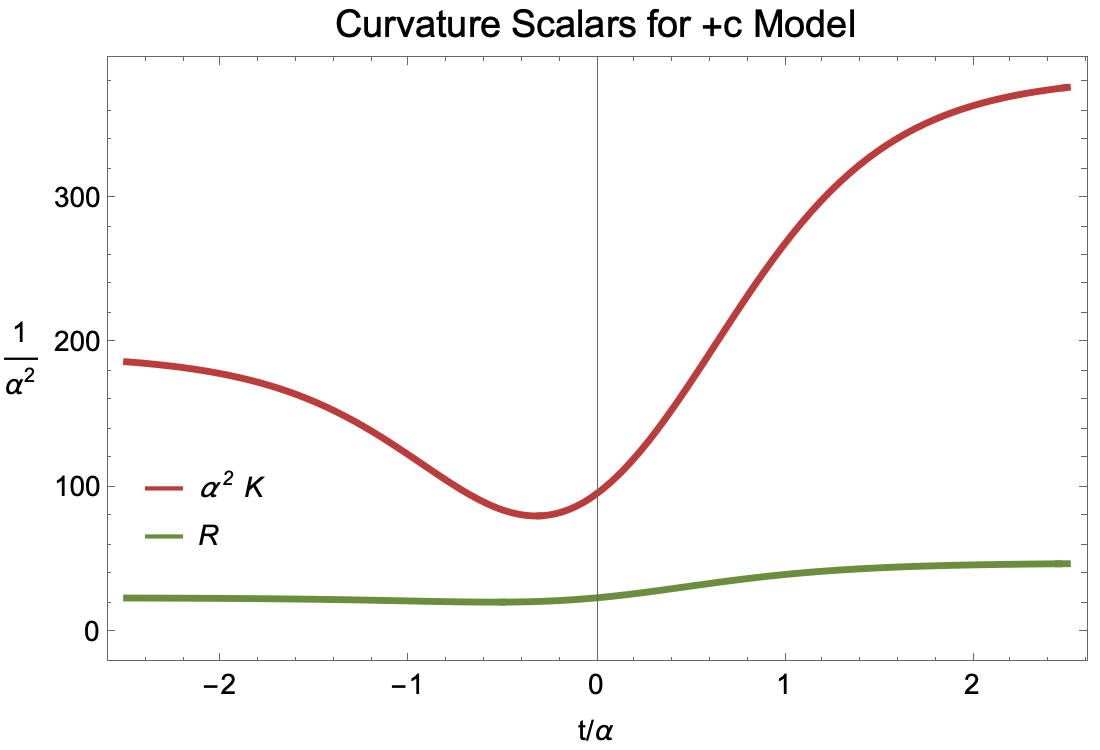}
  \captionof{figure}{Curvature scalars.
  %$\mathbb{R}^1_1 \times_{a_0 \exp[2 t/\alpha] + c} \mathbb{R}^3$ with $\alpha=1$  and $a_0 = c = \nicefrac{1}{2}$. 
  The Kretschmann scalar is plotted in red. The Ricci scalar is in green.  All curvatures have been renormalized to have units of $\nicefrac{1}{\alpha^2}$, and remain finite for all times.}
  \label{curvplusc}
\end{figure}
The model is eternally accelerating for all $c>0$ and nonzero $\alpha$. 
For $\alpha>0$, the universe expands monotonically and inflates at all times. 
In the infinite past ($t \to -\infty$) the scale factor approaches a constant 
$a \to c$ and the spacetime asymptotes to the Einstein--static universe with 
finite curvature $R=6/c^{2}$. In the infinite future ($t \to +\infty$) the 
expansion becomes asymptotically de Sitter, with
$H \to 2/\alpha$ and $R \to 24/\alpha^{2}$. Thus the geometry 
interpolates smoothly between a nonsingular, static past state and a future 
de Sitter phase.
%~\footnote{We have previously suggested a canonical definition for inflation in \cite{Lesnefsky:2022fen}, see Defn.~1. This definition includes arbitrarily short periods of accelerated expansion.}
For $\alpha<0$, the model is eternally contracting, and inflating, approaching zero acceleration as $t \rightarrow \infty$.

The cosmological evolution is depicted in \fig{ceplusc}. Shown are the scale factor $a(t)$, Hubble parameter $H$ and co-moving Hubble radius $H^{-1}/a$. When the co-moving Hubble radius is decreasing the spacetime is inflating--in this case it is decreasing for all time.

The absence of curvature singularities is confirmed by the finiteness of curvature invariants, as illustrated in Fig.~\ref{curvplusc}. We plot the Ricci scalar $R$ and the Kretschmann scalar $K = R_{\mu\nu\rho\sigma}R^{\mu\nu\rho\sigma}$, built from the Riemann tensor:

% The absence of curvature singularities is confirmed by the finite curvature invariants, as shown in Fig. \ref{curvplusc}. We plot the Ricci scalar $R$ and the Kretschmann scalar $K = R_{\mu\nu\rho\sigma}R^{\mu\nu\rho\sigma}$ built from the Riemann tensor. Both are observed to be finite for all $t$ hence the model does not admit any curvature singularities

\begin{eqnarray}
R &=& \frac{6 + \tfrac{24X(c+2X)}{\alpha^{2}}}{(c+X)^{2}} \,, \nonumber\\[4pt]
K &=& \frac{12\!\left[\tfrac{16X^{2}(c+X)^{2}}{\alpha^{4}}+\bigl(\tfrac{4X^{2}}{\alpha^{2}}+1\bigr)^{2}\right]}{(c+X)^{4}} \,,
\end{eqnarray}
where, $\quad X\equiv a_0 e^{2t/\alpha},\; a(t)=c+X \,.$
As shown, both quantities remain finite for all cosmic times $t$.

\emph{Geodesic Completeness}--Our discussion includes Generalized Friedmann-Robertson-Walker (GFRW) spacetimes, of which FRW models used in modern cosmology are an exceptional subset.
A time dimension of $\mathbb{R}$ is warped with smooth strictly positive scale factor $a>0$ to any geodesically complete Riemannian manifold constituting the purely spacelike foliation.  The FRW spacetime has a spatial section of constant sectional curvature $k$.  The geodesic completeness of GFRWs is completely determined by the behavior of $a$, as discussed in \cite{Lesnefsky:2022fen}:
 \begin{thm}{\noun{(Lesnefsky, Easson, Davies - (LED))}}--Consider a GFRW spacetime.
        \begin{enumerate}
            \item The spacetime is future timelike complete if and only if $ \int_{t_0}^\infty \frac{a \left( t \right) dt}{\sqrt{\left( a \left( t \right) \right)^2 + 1}}$ diverges for all $t_0 \in \mathbb{R}$. 
            \item The spacetime is future null complete if and only if $ \int_{t_0}^\infty a \left( t \right) dt$ diverges for all $t_0 \in \mathbb{R}$. 
            \item $\mathcal{M}$ is \emph{future spacelike complete} iff it is future null complete and
the warping function is bounded: $a<\infty$.
            \item The GFRW is past timelike / null / spacelike complete if, for items 1-3 above, upon reversing the limits of integration from $\int_{t_0}^\infty$ to $\int_{-\infty}^{t_0}$ the word ``future'' is replaced by ``past''.
            \item The spacetime is \emph{geodesically complete} if and only if it is both future and past timelike, null, and spacelike geodesically complete.
        \end{enumerate}
        \label{ledthm}
    \end{thm}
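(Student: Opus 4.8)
The plan is to reduce the completeness of the full geodesic flow on $\mathcal{M}=\mathbb{R}\times\Sigma$ with $ds^{2}=-dt^{2}+a(t)^{2}g_{\Sigma}$ to the convergence of a single one–dimensional integral. For an affinely parametrized geodesic $\gamma(\tau)=(t(\tau),\sigma(\tau))$ the warped–product structure supplies two first integrals: the normalization $g(\dot\gamma,\dot\gamma)=\epsilon$ with $\epsilon=-1,0,+1$ for timelike, null, and spacelike geodesics respectively, and the conserved spatial momentum $a(t)^{4}g_{\Sigma}(\dot\sigma,\dot\sigma)=E^{2}\ge 0$ — the latter being just the statement that the $\Sigma$–projection of $\gamma$ is, up to reparametrization, a geodesic of the complete manifold $\Sigma$. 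Eliminating $\dot\sigma$ gives $\dot t^{2}=E^{2}/a^{2}-\epsilon$, so the affine parameter elapsed from $t_{0}$ to $t$ along a branch on which $\dot t$ has fixed sign is $\int_{t_{0}}^{t}a\,dt'/\sqrt{E^{2}-\epsilon\,a^{2}}$. The first structural step is to argue that, because $\Sigma$ is complete and $a>0$, a maximal geodesic can fail to be complete only through $|t|\to\infty$ at finite affine parameter: if $t$ stays in a bounded set the geodesic is trapped in a region on which $a$ is bounded below, its $\Sigma$–arclength stays controlled, and it is therefore extendible. Consequently each of the six one–sided completeness statements is equivalent to divergence of the corresponding $\int^{\infty}$ or $\int_{-\infty}$ integral, for every admissible $E$ and every $t_{0}$.

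Next I would dispatch the null and timelike cases, which are bookkeeping once the reduction is in place. For null geodesics $\epsilon=0$ forces $E>0$, the elapsed affine parameter is $E^{-1}\int a\,dt'$, and since $E$ is a global constant future null completeness is exactly divergence of $\int^{\infty}a\,dt$ — item 2. For timelike geodesics $\epsilon=-1$ gives $\int a\,dt'/\sqrt{E^{2}+a^{2}}$; the comoving geodesics ($E=0$) are automatically complete since then the integrand is $1$, and for $E>0$ the elementary two–sided bound $\min(1,1/E)\le\sqrt{(1+a^{2})/(E^{2}+a^{2})}\le\max(1,1/E)$, valid for all $a>0$, shows that the $E$– and $E=1$–integrands converge and diverge together, so the single condition in item 1 captures every timelike geodesic. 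In both cases independence of $t_{0}$ is immediate since $a$ is continuous and positive, so shifting $t_{0}$ changes the integral by a finite amount.

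The spacelike case, item 3, is where the real work lies and which I expect to be the main obstacle. Here $E>0$ and $\dot t^{2}=E^{2}/a^{2}-1$, so a spacelike geodesic is confined to $\{a\le E\}$ and $\dot t$ vanishes precisely on the turning slices $\{a=E\}$; the $t$–motion therefore either oscillates inside a bounded connected component of $\{a\le E\}$ — a case one must show contributes only complete geodesics, since a single traversal of a component with endpoint singularities of order $(t-t_{\ast})^{-1/2}$ costs finite affine length and the geodesic makes infinitely many such traversals — or it escapes to $t\to+\infty$ along a component on which $a<E$ for all large $t$, in which case the affine length to infinity is $\int^{\infty}a\,dt'/\sqrt{E^{2}-a^{2}}$. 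The strategy is then: when $a$ is bounded one may take $E>\sup a$, so $\sqrt{E^{2}-a^{2}}$ is bounded away from zero, the integral is comparable to $\int^{\infty}a\,dt'$, and future null completeness is promoted to future spacelike completeness; conversely, if $\int^{\infty}a\,dt'$ converges while $a$ is bounded the same choice of $E$ exhibits an incomplete spacelike geodesic, and the role of the boundedness hypothesis is isolated by tracking which momenta $E$ admit escaping geodesics at all. Making this dichotomy airtight — treating degenerate turning slices where $\dot a=0$ on $\{a=E\}$, managing the quantifier over all admissible $E$ including the threshold $E=\sup a$, and excluding pathological $t$–histories with infinitely many turning points accumulating at a finite affine parameter — is the delicate part.

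Finally, item 4 follows by applying items 1--3 to the reflected warping function $t\mapsto a(-t)$, which exchanges past and future; and item 5 is immediate from the definition, since a spacetime is geodesically complete precisely when every maximal geodesic — necessarily of one of the three causal types — is complete in both affine directions, i.e.\ when all six one–sided statements hold.
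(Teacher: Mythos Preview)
The paper does not actually prove this theorem: it is quoted from the companion work \cite{Lesnefsky:2022fen} (``as discussed in \cite{Lesnefsky:2022fen}'') and then \emph{applied} to the scale factor \eqref{aplusc}, so there is no in-paper argument to compare your proposal against.

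That said, your outline is the standard and correct route for results of this type. The warped-product first integrals, the reduction of affine length to $\int a\,dt/\sqrt{E^{2}-\epsilon a^{2}}$, the observation that completeness of $\Sigma$ forces any incompleteness to manifest as $|t|\to\infty$ in finite affine parameter, and the uniform comparison $\min(1,1/E)\le\sqrt{(1+a^{2})/(E^{2}+a^{2})}\le\max(1,1/E)$ to collapse the timelike family to the single $E=1$ integral are exactly the ingredients one expects in the cited proof. Your identification of the spacelike case as the delicate one --- turning points, oscillation in bounded components of $\{a\le E\}$, degenerate slices with $\dot a=0$, and the threshold $E=\sup a$ --- is also on target; that is indeed where the argument requires genuine care, and where the precise meaning of ``future spacelike complete'' (as opposed to merely ``spacelike complete'') needs to be pinned down, since a spacelike geodesic carries no intrinsic time orientation and may never reach large $t$ at all. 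Items~4 and~5 are, as you say, trivial once 1--3 are in hand.
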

%
% \begin{thm}[\textbf{Lesnefsky--Easson--Davies (LED)}]\label{ledthm}
% Let $\mathcal{M}=\mathbb{R}^1_1\times_a\Sigma$ be a GFRW spacetime with $(\Sigma,g_\Sigma)$
% a complete Riemannian manifold and $a\in C^\infty(\mathbb{R})$, $a>0$. Then:
% \begin{enumerate}
% \item $\mathcal{M}$ is \emph{future timelike complete} iff
% \(
% \displaystyle \int_{t_0}^{\infty} \frac{a(\zeta)}{\sqrt{a(\zeta)^2+1}}\,d\zeta=\infty
% \)
% for all $t_0\in\mathbb{R}$.
% \item $\mathcal{M}$ is \emph{future null complete} iff
% \(
% \displaystyle \int_{t_0}^{\infty} a(\zeta)\,d\zeta=\infty
% \)
% for all $t_0\in\mathbb{R}$.
% \item $\mathcal{M}$ is \emph{future spacelike complete} iff it is future null complete and
% the warping function is bounded: $a<\infty$.
% \item The past-directed versions follow by replacing $\int_{t_0}^{\infty}$ with $\int_{-\infty}^{t_0}$
% and “future’’ with “past’’ in items 1–3.
% \item $\mathcal{M}$ is \emph{geodesically complete} iff it is both future and past timelike, null,
% and spacelike complete.
% \end{enumerate}
% \end{thm}

Unlike the BGV theorem \cite{Borde:2001nh} to be discussed below, which purports only to show geodesic incompleteness, Thm.~\ref{ledthm} represents a significant advancement, offering a concrete method for ascertaining the geodesic completeness, or incompleteness, of a specific FRW spacetime.

For the scale factor of Eq. \ref{aplusc}, it is possible to explicitly calculate the integrals of Thm.~\ref{ledthm}. Assuming $c>0$ and $a_0>0$, we find for the indefinite integrals:
\begin{align}
&\int^t \frac{a(\zeta)}{\sqrt{(a(\zeta))^2 + 1}} \, d\zeta =
\frac{1}{2} \alpha \arcsinh \left(c + a_0 e^{\frac{2t}{\alpha}}\right) \nonumber \\
&+ \frac{ \alpha \,c}{\sqrt{1 + c^2}} \arctanh \left(\frac{a_0 e^{\frac{2t}{\alpha}} - \sqrt{1 + \left(c + a_0 e^{\frac{2t}{\alpha}}\right)^2}}{\sqrt{1 + c^2}}\right)
\end{align}
and
\begin{equation}
\int^t a \left( \zeta \right) d\zeta= ct + a_0 \frac{\alpha}{2} e^{\frac{2t}{\alpha}} \,. 
\end{equation}

It is easy to show the above integrals diverge over the full set of conditions discussed in Thm.~\ref{ledthm} for all (non-zero) values of $\alpha$; hence, the spacetime with scale factor \eq{aplusc} is geodesically complete. 
  
We have thus demonstrated that the FRW spacetime defined by the scale factor in Eq.~\ref{aplusc} is geodesically complete, eternally inflating, and nonsingular---thereby providing a concrete counterexample to the prevailing interpretation of the BGV theorem. This result shows that past-complete inflationary models can be constructed within classical general relativity. %albeit at the cost of violating NEC–in this case, in the future. 
Such models are not isolated curiosities: additional examples of geodesically complete cosmologies are presented in~\cite{Easson:2024fzn}. 
For example, in the limiting-curvature constructions of~\cite{Chamseddine:2016uef}, the FRW scale
factor remains finite for all cosmic times and undergoes a smooth,
nonsingular evolution, providing an explicit realization of a
geodesically complete cosmology. The model exhibits both a bounce and a period of accelerated expansion in line with our Conjecture 5 below.~\footnote{
Although the scale factor in Ref.~\cite{Chamseddine:2016uef} is originally derived
within a modified-gravity framework, the same background evolution may
equivalently be embedded at the level of homogeneous FRW dynamics in standard
general relativity in a closed $(k=1)$ universe and sourced by a canonically
normalized scalar field with an appropriate potential, using standard
reconstruction techniques.
}

 Solutions of this type illustrate that the model studied in this Letter is representative of a broader
class of nonsingular FRW spacetimes. We refer the reader to \cite{Easson:2024fzn,Burwig:2025hrr}
for further examples and a systematic analysis.

%Many of these FRW models can be extended to the full GFRW formalism, however the spatial curvature will contribute to curvature terms, and as previously discussed is not the limiting moiety for completeness and will only obfuscate the mathematics.

\begin{figure}[H]
\centering
  \includegraphics[width=1\linewidth]{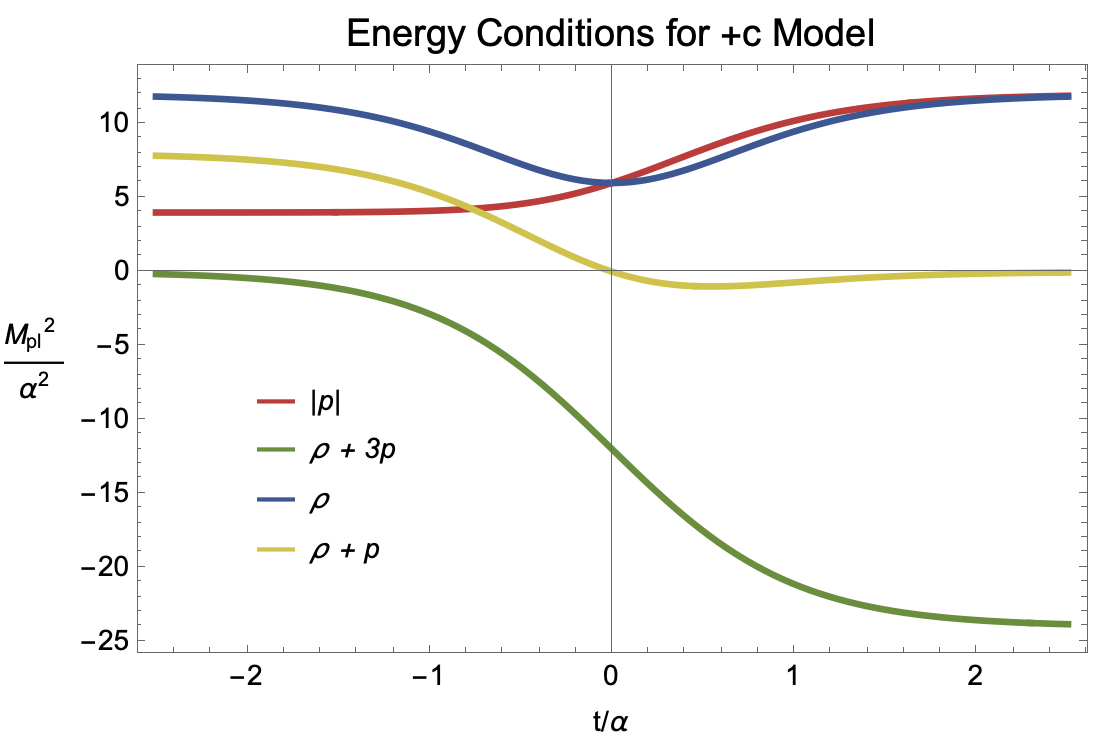}
 \captionof{figure}{
Energy conditions from \eq{etplusc}. Plot of the energy density $\rho$ (blue),
$\rho + p$ (yellow), $|p|$ (red), and $\rho + 3p$ (green).
At early times ($t\to -\infty$), $\rho+p$ approaches a constant positive value,
while at late times ($t\to +\infty$) it asymptotically decays to zero,
illustrating the saturation behavior of the NEC.
}
  \label{ecplusc}
\end{figure}
%\subsection{Energy conditions}
\emph{Energy conditions}--While both Thm.~\ref{ledthm} and the BGV theorem report to hold independent of the energy conditions, such an analysis is informative and we now examine the energy conditions for the model given by Eq. \ref{aplusc}. Calculation of the Einstein tensor yields non-vanishing components:
\begin{eqnarray}\label{etplusc}
  G_{tt} &=& \frac{3 + \tfrac{12 a_0^{2} e^{\tfrac{4t}{\alpha}}}{\alpha^{2}}}
                  {\left(c + a_0 e^{\tfrac{2t}{\alpha}}\right)^{2}}\,, \nonumber\\[4pt]
  G_{ii} &=& -\,\frac{1 + \tfrac{4 a_0 e^{\tfrac{2t}{\alpha}}\!\left(2c + 3 a_0 e^{\tfrac{2t}{\alpha}}\right)}{\alpha^{2}}}
                  {\left(c + a_0 e^{\tfrac{2t}{\alpha}}\right)^{2}}\,g_{ii}\,,
\end{eqnarray}
%the k=0 flat case:
% \begin{eqnarray}\label{etplusc}
%    G_{tt} &=& \frac{12 a_0^2 e^{\frac{4t}{\alpha}}}{(c + a_0 e^{\frac{2t}{\alpha}})^2 \alpha^2} \,, \nonumber \\
%    G_{ii} &=& -\frac{4 a_0 e^{\frac{2t}{\alpha}} (2c + 3a_0 e^{\frac{2t}{\alpha}})}{\alpha^2} \,.
% \end{eqnarray}
The energy density is given by $\rho = - G^t{}_t$ and the pressure is $p = G^i{}_i$. A plot elucidating the energy conditions is given in \fig{ecplusc}.

To interpret this plot we recall that in the standard perfect-fluid treatment of matter where $p$ and $\rho$ are the pressure and energy density of the fluid respectively, the \emph{energy conditions} in a cosmological setting are as follows \cite{S.W.Hawking1973}:
Weak energy condition (WEC): $\rho\ge 0$ and $\rho + p \ge 0$, Null energy condition (NEC): $\rho + p \ge 0$, Dominant energy condition (DEC): $\rho\ge |p|$, and Strong energy condition (SEC): $\rho + p \ge 0$ and $\rho + 3p \ge 0$. Note that the DEC$\implies$WEC, WEC$\implies$NEC, SEC$\implies$NEC; and SEC$\rlap{$\quad\not$}\implies$WEC.

Our stance on employing energy conditions as a stringent basis for critiquing models is marked by ambivalence. Notably, all classical energy conditions are unequivocally breached by quantum effects, a fact supported both experimentally and theoretically, as exemplified by the Casimir Effect \cite{Casimir:1948dh} and outlined in \cite{Epstein:1965zza}. Consequently, we entrust the assessment of solution viability to the discernment of our readers, refraining from making definitive judgments \cite{Barcelo:2002bv,Chatterjee:2012zh}.

From Fig.~\ref{ecplusc} we see that each of the classical energy conditions is
violated at some stage of the eternal inflationary model defined by
Eq.~\eqref{aplusc}. In particular, violation of the NEC is indicated by the
yellow curve dipping below the horizontal axis; this violation is confined to
late times. The NEC is satisfied at early times and approaches saturation
asymptotically. Analytically, one finds $\rho+p \to 2/c^{2}$ as
$t\to -\infty$ and $\rho+p \to 0$ as $t\to +\infty$, in agreement with the
behavior shown in Fig.~\ref{ecplusc}. Since the model is eternally inflating,
with $\ddot a>0$ for all time, the strong energy condition is violated
throughout the evolution, as reflected by the persistent negativity of the
green curve.

Hence, the price of realizing this eternal, nonsingular, inflating universe
within classical GR is a temporary, future violation of the null energy condition
(NEC). While such violations are often associated with instabilities in
simple perfect-fluid models, it is now well established that controlled and
stable NEC violation can arise in consistent effective field theories.
Explicit examples include ghost-condensate models, galileon and generalized
Horndeski theories, and related higher-derivative scalar EFTs, in which NEC
violation occurs without the introduction of ghosts or gradient
instabilities within their regime of validity
\cite{Dubovsky:2005xd,Nicolis:2009qm,Kobayashi:2010cm,Easson:2011zy,
Battefeld:2014uga,Easson:2018qgr}.

From this perspective, the stress tensor appearing in Eq.~(8) should be
understood as an effective description of a scalar sector with derivative
self-interactions, rather than as a fundamental fluid. We emphasize that the
purpose of the present construction is not to commit to a specific
microphysical Lagrangian, but to demonstrate that the geometry and stress
tensor required for geodesic completeness and eternal inflation are fully
compatible with known classes of stable NEC-violating EFTs.

In particular, backgrounds closely related to the plus-c scale factor can arise as homogeneous cosmological solutions of galileon-type and ghost-condensate theories, in which NEC violation is localized in time and does not destabilize the background evolution. Thus, while we do not attempt a full EFT reconstruction here, the plus-c model lies squarely within
the landscape of cosmological backgrounds that admit a consistent effective
field theory completion. In other models the duration of the NEC violating
interval can be made arbitrarily short, potentially of order the Planck
time, without spoiling geodesic completeness \cite{Easson:2024fzn}.

%\subsection*{Energy Conditions Beyond the NEC}
We now examine two integrated energy conditions that provide even stronger diagnostics: the averaged null energy condition (ANEC) \cite{PhysRevD.51.4277, Wald:1991xn} and the smeared null energy condition (SNEC)  \cite{Freivogel:2018gxj,Moghtaderi:2025cns}.
%\paragraph{ANEC.}
The ANEC serves as a powerful diagnostic for distinguishing between “benign” and “pathological” violations of the NEC. While the NEC may be locally violated in physically reasonable settings–such as in semi-classical or quantum field theory–these violations are often harmless when they occur in small, localized regions, provided they are compensated by positive energy elsewhere along the same null geodesic. The ANEC formalizes this by requiring that the integral of $T_{\mu\nu} k^\mu k^\nu$ along a complete null geodesic remain nonnegative. Unlike the pointwise NEC, the ANEC is known to hold in a wide range of well-behaved quantum field theories in flat spacetime and is often preserved even in curved backgrounds, so long as extreme phenomena such as traversable wormholes or closed timelike curves are absent. As such, the ANEC provides a more robust and physically meaningful constraint on energy densities than the NEC alone.

The ANEC is given by averaging along complete null geodesics with normalization \( k^t = \frac{dt}{d\lambda} = \frac{1}{a} \). In the case of \eq{aplusc}, the integral yields:
\beq
\int_{-\infty}^{\infty} T_{\mu\nu} k^\mu k^\nu\, d\lambda
= \int_{-\infty}^{\infty} \frac{\rho + p}{a(t)}\, dt = +\infty,
\eeq
so the ANEC is satisfied in the strongest sense. This implies averaged focusing via the Raychaudhuri equation and is consistent with the absence of standard exotic causal structures under customary global assumptions, despite late-time NEC violation.

%\paragraph{SNEC.}
We further evaluate a quantum-inequality-inspired smeared null energy condition (SNEC),
\beq
\int f(\lambda)\, T_{\mu\nu} k^\mu k^\nu\, d\lambda \;\geq\; -\frac{C}{\ell^4},
\eeq
where \( f(\lambda) \) is a smooth sampling function of affine width \( \ell \). Using Gaussian smearings for $f$ and the same null normalization, we find that the smeared average becomes negative at late times (reflecting the background NEC violation) but is rapidly suppressed; across a broad range of centers and widths the data are compatible with a dimensionless constant \( C = \mathcal{O}(1) \). We find that while the strict SNEC can be violated by localized smearings supported in the late-time NEC-violating region, the violation is quantitatively bounded and physically mild. Moreover, for widening smearings that sample a sufficient portion of the positive-energy past, the smeared null average becomes nonnegative. These results demonstrate controlled, non-pathological SNEC violation, fully consistent with the strong satisfaction of the ANEC (see Appendix).

%\paragraph{Interpretation.}
The presented spacetime is geodesically complete, eternally inflating, and nonsingular--directly contradicting the widely held notion that inflationary cosmologies must be incomplete, regardless of energy condition violation. This model satisfies the ANEC strongly, while exhibiting only mild, localized violations of the standard NEC, consistent with the Smeared NEC (SNEC) under finite smearing. These diagnostics suggest that the NEC violation is physically controlled and non-pathological at the level of integrated energy conditions. Consequently, we have proven certain inflationary models are capable of evading the initial cosmological singularity without invoking quantum gravity or exotic boundary conditions.~\footnote{Full dynamical stability of perturbations depends on the microphysical completion, to be studied in future work.}

\emph{BGV Theorem}--The results presented above may appear to conflict with the widely cited no-go theorem of BGV, which is often interpreted as ruling out the possibility of past-eternal inflation. We therefore examine the BGV theorem in the context of the model defined by \eq{aplusc}.
%$\mathbb{R}^1_1 \times_{a_0 \exp[2 t/\alpha] + c} \mathbb{R}^3$.
The theorem, Eq.~(5) of \cite{Borde:2001nh}, may be quantified:
\begin{thm} \label{thm:bgv} 
\emph{\noun{(Borde, Guth, Vilenkin - (BGV))}}--Consider a spacetime.  Let
$\gamma$ be some causal geodesic defined over domain $\left[\lambda_{i},\lambda_{f}\right]$.~~If the quantity
\begin{equation}
H_{avg}^{\gamma}=\frac{1}{\lambda_{f}-\lambda_{i}}\int_{\lambda_{i}}^{\lambda_{f}}H^{\gamma}\left( \zeta \right)d\zeta\label{eq:havg bgv}
\end{equation}
is strictly positive along the image of $\gamma$, the spacetime is geodesically-incomplete.
\end{thm}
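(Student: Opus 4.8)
The plan is to reproduce the original BGV argument: show that the averaged expansion along a causal geodesic is a total derivative of a bounded, positive quantity, which caps the affine length of any past-directed segment and so forbids past-completeness.

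First I would introduce a reference congruence of ``comoving'' observers with future-directed unit timelike field $u^\mu$ --- in an FRW background the isotropic observers, and in general a congruence adapted to $\gamma$ in the sense of \cite{Borde:2001nh}. Writing $v^\mu = dx^\mu/d\lambda$ for the affinely parametrized tangent of the causal geodesic $\gamma$, I would define the positive scalar $\omega(\lambda) = -u_\mu v^\mu$: for null $\gamma$ this is the frequency/redshift factor measured by the congruence, and for timelike $\gamma$ (normalized to $v_\mu v^\mu=-1$) it is the relative Lorentz factor $\omega=\gamma_L\ge1$. I would also record the (harmless) caveat that $\gamma$ must not itself be comoving, since a comoving geodesic has $\omega\equiv1$ and can be complete, as in de Sitter.

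Next I would differentiate $\omega$ along $\gamma$. The geodesic equation $v^\mu\nabla_\mu v^\nu=0$ gives $d\omega/d\lambda = -v^\mu v^\nu\nabla_\mu u_\nu$, and decomposing $\nabla_\mu u_\nu$ into expansion, shear, vorticity and acceleration collapses the right-hand side --- for the shear-free, geodesic FRW congruence --- to $-\tfrac{1}{3}\theta\,(v^\mu v^\nu h_{\mu\nu})$, with $h_{\mu\nu}$ the spatial projector and $\theta=3H$ the expansion. From this one reads off a bounded, everywhere-positive function $F$ with $H^\gamma = dF(\omega)/d\lambda$ along $\gamma$: $F(\omega)=1/\omega$ in the null case and $F(\omega)=\tfrac12\ln\!\frac{\omega+1}{\omega-1}$ in the timelike case. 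Integrating then gives $\int_{\lambda_i}^{\lambda_f} H^\gamma(\zeta)\,d\zeta = F(\omega(\lambda_f))-F(\omega(\lambda_i)) < F(\omega(\lambda_f))=:K$, a finite constant fixed by the endpoint data at $\lambda_f$ alone, because $F>0$. Hence $(\lambda_f-\lambda_i)\,H_{avg}^\gamma < K$; if $\gamma$ were past-complete we could send $\lambda_i\to-\infty$, making $\lambda_f-\lambda_i\to\infty$ while the left side stays below $K$, forcing $H_{avg}^\gamma\to0$ and contradicting the hypothesis that $H_{avg}^\gamma$ stays bounded below by a positive number along $\gamma$. Therefore $\gamma$ has finite past affine extent and the spacetime is geodesically incomplete.

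The hard part will be the second step in full generality, since a generic spacetime has no canonical comoving congruence: one must build a suitable congruence through $\gamma$, define $H^\gamma$ as its normalized directional expansion, and check that the total-derivative identity $H^\gamma=dF(\omega)/d\lambda$ with $F$ bounded and positive survives once shear, vorticity and acceleration terms are present --- and it is precisely the assumption that $H_{avg}^\gamma$ remains positive that lets one neglect the pieces that would otherwise spoil the bound.
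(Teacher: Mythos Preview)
The paper does not supply its own proof of Thm.~\ref{thm:bgv}; it merely quotes the BGV result from \cite{Borde:2001nh} and then devotes the surrounding section to \emph{critiquing} its formulation---in particular, the fact that the hypothesis is phrased for a geodesic on a compact domain $[\lambda_i,\lambda_f]$, so that the passage to $\lambda_i\to-\infty$ is never made explicit and the value of $H_{avg}^\gamma$ can depend on arbitrary choices of integration window (see the discussion around Eq.~\ref{eq:havg non lim inf}--Eq.~\ref{eq:havg non lim inf future} and the footnote proposing the $\varliminf$ reformulation). There is therefore no in-paper proof to compare against.

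Your write-up is a faithful reconstruction of the original BGV argument: the total-derivative identity $H^\gamma=dF(\omega)/d\lambda$ with $F(\omega)=1/\omega$ (null) and $F(\omega)=\tfrac12\ln\!\frac{\omega+1}{\omega-1}$ (timelike, $\omega>1$) is exactly what \cite{Borde:2001nh} establishes, and the bound $\int_{\lambda_i}^{\lambda_f} H^\gamma\,d\lambda<F(\omega(\lambda_f))$ follows since $F>0$. One point to sharpen: at the step where you ``send $\lambda_i\to-\infty$'' you tacitly extend $\gamma$ beyond the compact domain on which the theorem is stated and replace the single-interval hypothesis by a uniform positive lower bound on $H_{avg}^\gamma$ over all past extensions. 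That reinterpretation is precisely what the paper flags as the ambiguity in the BGV statement, and it is why the ``plus~$c$'' model of Eq.~\ref{aplusc} has $H_{avg}^\gamma>0$ on every finite interval yet is past-complete (its liminf vanishes). So your argument is correct for the intended theorem, but you should make explicit that the hypothesis you actually use is the $\varliminf$ version rather than the literal compact-interval one.
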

\noindent
As we discuss below, $H_{avg}^{\gamma}$ should be understood as an asymptotic quantity evaluated on maximal past-directed geodesics (see footnote 6).
 
Without loss of generality we take $a_0=1$, $t_i < 0$, and $t_f = 0$, and select any connected interval $\left[ t_i , 0 \right]$ where the boundary is actually realized:
\begin{equation}
     H_{avg}^{\gamma}=  \frac{1}{-t_i} \int_{t_i}^0 H dt = \frac{1}{-t_i} \ln \left( \frac{1 + c}{e^{\nicefrac{t_i}{\alpha}} + c}\right) >0 \label{eq:havg non lim inf}
\end{equation}
because, $1 + c > e^{\nicefrac{t_i}{\alpha}} +c$. Thus, direct calculation of \eq{eq:havg bgv}, yields $H_{avg}>0$, yet despite Thm.~\ref{thm:bgv}, the model is geodesically complete per Thm.~\ref{ledthm}. This apparent tension exposes a deeper issue: In Thm.~\ref{thm:bgv},  \eq{eq:havg bgv} is computed over compact intervals; on maximal past rays the averaged rate tends to
0, so the hypothesis fails. A proper discussion of geodesic completeness should involve maximal geodesics or maximal geodesic rays, as we have shown above. A geodesic defined over a compact interval is \it inherently \rm incomplete, as it must inevitably encounter a singularity or boundary at its endpoint, rendering it categorically incomplete yet inextendable, or it is straightforwardly extendable in a (possibly small but non-empty) open neighborhood of the endpoint by the exponential map \cite{Lesnefsky:2022fen}.

One may assume that the authors of Thm. \ref{thm:bgv} intended that the limit $t_i \rightarrow -\infty$ be taken; although, no such limits were explicitly discussed in \cite{Borde:2001nh}, leaving the original formulation ambiguous.~\footnote{A precise, nonambiguous hypothesis is, for example,
\[
\varliminf_{L\to\infty}\,\frac{1}{L}\int_{-L}^{0} H^\gamma(\lambda)\,d\lambda > 0\,,
\]
for a past-directed affine parameter $\lambda$ along $\gamma$. This allows fluctuations but demands positivity that persists as the averaging window extends to the infinite past.
Our model gives a vanishing liminf, so it lies outside the  scope of the theorem.}

This limiting case would correspond to a past-directed maximal geodesic ray. Naturally, one may further consider a geodesic maximally extended in both temporal directions. 
For such a past-directed maximal geodesic,  \eq{eq:havg non lim inf} yields $H_{avg}=0$, and one may then argue that the BGV theorem does not apply. This result is due to the cofinite interval suppression of $1/(-t_i)$ in the integral. Hence, with this limit taken, there is no conflict; the spacetime is geodesically complete exactly because it evades the BGV hypothesis in the infinite-past limit.

However, this highlights a conceptual tension: at all times $H > 0$ and over any finite interval $H_{avg} > 0$.  In fact, depending on the behavior of the scale factor, how intervals are selected in \eq{eq:havg bgv}, and how the limiting process is executed, one can calculate a \emph{continuum} of values for $H_{avg}$ including both zero and positive values (in this case up to $\nicefrac{2}{\alpha}$). Without further clarification, the BGV theorem may both apply or fail to apply to the same spacetime, depending solely on arbitrary choices of integration domain--despite the geodesic completeness of the spacetime having been definitively established by Thm. \ref{ledthm}. 

 While \cite{Borde:2001nh} does not discuss future completeness, we may consider this case by calculating \eq{eq:havg bgv} with $t_i = 0$, and $t_f>0$:
\begin{equation}
     H_{avg}^{\gamma}=  \frac{1}{t_f} \int_{0}^{t_f} H dt = \frac{1}{t_f} \ln \left( \frac{e^{\nicefrac{2t_f}{\alpha}} + c}{1 + c}\right) \,,\label{eq:havg non lim inf future}
\end{equation}
and taking the limit $t_f \rightarrow +\infty$, yielding $H_{avg} = 2/\alpha$; thus, for $\alpha>0$, $H_{avg}>0$, and yet the spacetime is future complete. Hence, a positive average expansion rate in the future does not necessarily lead to future geodesic incompleteness. This observation highlights that the BGV theorem conclusions are not symmetrical with respect to time direction.  
Further concerns pertaining to the above are detailed in \cite{Lesnefsky:2022fen}. %Thm.\ref{thm:bgv} is thus rendered inadequate for a rigorous discussion of geodesic completeness. 
For recent developments in this area see \cite{Kinney:2023urn,Geshnizjani:2023hyd,Garcia-Saenz:2024ogr}.

\emph{Implications of LED}--We now turn to several important consequences that follow from Thm. \ref{ledthm}.  We begin with a proposition:
% \noindent
\begin{prop} \label{cor: complete are inflationary}
Every geodesically complete GFRW spacetime with a smooth, non-constant scale factor $a(t)$ must undergo accelerated expansion for at least some period of time.
\end{prop}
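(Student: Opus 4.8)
\emph{Proof proposal.}
The plan is to argue by contradiction. Suppose the GFRW never accelerates, i.e.\ $\ddot a(t)\le 0$ for every $t\in\mathbb{R}$, and deduce that $a$ must be constant, contradicting the hypothesis that $a$ is non-constant. The mechanism is concavity: $\ddot a\le 0$ on all of $\mathbb{R}$ says exactly that $\dot a$ is non-increasing, so $a$ is a concave function of cosmic time, and the key analytic fact I would isolate is that a $C^{2}$ function which is concave and strictly positive on all of $\mathbb{R}$ — as any GFRW scale factor must be — is necessarily constant. Thm.~\ref{ledthm} furnishes a parallel route that I would mention as a cross-check: geodesic completeness forces $\int_{-\infty}^{t_0}a\,dt=\int_{t_0}^{\infty}a\,dt=+\infty$, which is likewise violated in each case below.

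First I would dispose of the case in which $\dot a(t_0)>0$ at some instant $t_0$. Since $\dot a$ is non-increasing, $\dot a(t)\ge\dot a(t_0)>0$ for all $t\le t_0$; integrating backwards, $a(t)=a(t_0)-\int_{t}^{t_0}\dot a(s)\,ds\le a(t_0)-\dot a(t_0)(t_0-t)\to-\infty$ as $t\to-\infty$, contradicting the positivity $a>0$ demanded of a GFRW scale factor. The time-reversed estimate rules out $\dot a(t_0)<0$ at some $t_0$: then $\dot a(t)\le\dot a(t_0)<0$ for all $t\ge t_0$, so $a(t)\le a(t_0)+\dot a(t_0)(t-t_0)\to-\infty$ as $t\to+\infty$, again impossible. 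The only remaining possibility is $\dot a\equiv 0$, i.e.\ $a$ constant, which the hypothesis excludes. Hence $\ddot a(t_\star)>0$ for some $t_\star$, and since $a$ is smooth, $\ddot a>0$ on an open interval about $t_\star$; that interval is the claimed epoch of accelerated expansion.

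The argument is short, so there is no serious obstacle; the one point to handle with care is the concavity step itself. Its content is that concavity makes the slope of $a$ at least as large in the past as at any reference time, so $\dot a$ cannot taper off as $t\to-\infty$ — precisely what forbids a strictly positive, non-constant concave profile on $\mathbb{R}$, and hence what compels a non-trivial geodesically complete GFRW to inflate at some stage. I would close by noting that this sharpens the heuristic role of inflation in past-completeness: the linear runaways produced in the two cases above are doubly excluded once one also demands the null-completeness integrals of Thm.~\ref{ledthm}, so accelerated expansion is not an optional feature but a structural necessity of any non-trivial complete GFRW cosmology.
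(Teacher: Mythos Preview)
Your argument is correct. The paper itself does not supply a proof of this proposition; it simply labels the result ``purely geometric,'' gestures at Thm.~\ref{ledthm}, and defers the details to the companion work~\cite{Easson:2024fzn}. Your concavity argument is therefore more explicit than anything appearing in the present text.

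It is also worth noting that your main line of reasoning is in fact \emph{stronger} than the stated proposition: you use only that $a$ is smooth, strictly positive, and defined on all of $\mathbb{R}$---properties that hold for \emph{every} GFRW by definition---and never actually invoke the integral criteria of Thm.~\ref{ledthm}. Thus you have shown that any GFRW with non-constant $a$, geodesically complete or not, must satisfy $\ddot a>0$ on some open interval; the geodesic-completeness hypothesis in Prop.~\ref{cor: complete are inflationary} is superfluous. The cross-check you offer via the null-completeness integrals is consistent but redundant: the concavity bound forces $a$ to become negative in finite time, so one exits the GFRW class before those integrals even come into play.
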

%Before we continue, we would like to remind the reader that 
This is a purely geometric result, and applies to any metrical theory of gravity. The LED theorem provides necessary and sufficient conditions for null
geodesic completeness in FRW spacetimes
in terms of integrability conditions on the scale factor. For a nonstatic FRW universe, past and future completeness
require that the affine parameter diverge as $t \to \pm\infty$, which
in turn constrains the asymptotic behavior $a(t)$.
In the absence of a phase with $\ddot a>0$, these integrability
conditions cannot be satisfied simultaneously unless the scale factor
is constant. Therefore, any nontrivially evolving geodesically complete
FRW spacetime must contain at least one epoch of accelerated expansion.
A complete proof is given in \cite{Easson:2024fzn}. We emphasize that Proposition~3 applies to nonstatic FRW cosmologies with genuinely evolving scale factor.

Given the direct geodesic equation solution method of Thm. \ref{ledthm}, one arrives at a paradigm shifting pronouncement: non-trivially evolving geodesically complete universes must experience inflation-like behavior. Here we use the term ``inflationary" synonymously with ``accelerated expansion". A detailed exploration of this proposition and its proof, including explicit constructions of geodesically complete bouncing spacetimes, is provided in our companion work \cite{Easson:2024fzn}.

As a direct application of Prop. \ref{cor: complete are inflationary} one reaps the following:
\begin{conj} \label{conj: gr complete nec violation}
In General Relativity, every smooth, non-constant scale factor $a(t)$ of a geodesically complete, flat FRW spacetime must violate the NEC during at least some period of time.
\end{conj}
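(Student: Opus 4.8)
\emph{Proof proposal.} The plan is to convert the NEC into a monotonicity statement about the Hubble rate and then show that this monotonicity is incompatible with the two–sided null completeness forced by Thm.~\ref{ledthm}. First I would recall that in GR, for a flat ($k=0$) FRW spacetime, the Friedmann equations give in reduced Planck units $\rho = 3H^2$ and $\rho + p = -2\dot H$. Hence the NEC $\rho+p\ge 0$ holds at \emph{every} time if and only if $\dot H\le 0$ for all $t$, i.e. if and only if $H=\dot a/a$ is non-increasing on $\mathbb{R}$. So it suffices to prove the contrapositive: if $a$ is smooth, strictly positive and non-constant and $H$ is non-increasing, then the spacetime is geodesically \emph{incomplete}.

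Next, I would invoke Thm.~\ref{ledthm} (items 2 and 4): geodesic completeness requires, in particular, both future and past null completeness, which for a flat FRW means $\int_{t_0}^{\infty} a(t)\,dt$ and $\int_{-\infty}^{t_0} a(t)\,dt$ both diverge for every $t_0$. (This null criterion is curvature-independent, since along a null geodesic the affine parameter scales as $d\lambda\propto a\,dt$, so only the bare $\int a\,dt$ condition, not the $\sqrt{a^2+1}$ form, is needed in the flat case.) It therefore suffices to exhibit one convergent tail. Since $a$ is non-constant, $H\not\equiv 0$, so $H$ is strictly positive at some time or strictly negative at some time. If $H(t_1)>0$, monotonicity gives $H(t)\ge H(t_1)>0$ for all $t\le t_1$, whence $a(t)\le a(t_1)\,e^{H(t_1)(t-t_1)}$ on $(-\infty,t_1]$ and $\int_{-\infty}^{t_1} a(t)\,dt\le a(t_1)/H(t_1)<\infty$: the spacetime is past null incomplete. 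If instead $H(t_2)<0$, then $H(t)\le H(t_2)<0$ for all $t\ge t_2$, so $a(t)\le a(t_2)\,e^{H(t_2)(t-t_2)}$ decays exponentially and $\int_{t_2}^{\infty} a(t)\,dt<\infty$: the spacetime is future null incomplete. Either way it fails to be geodesically complete, which proves the conjecture; and since $\dot H>0$ at a single time forces $\dot H>0$ on an open interval by continuity, the violation genuinely occupies ``some period of time.''

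The step I would flag as the main point — rather than a genuine obstacle — is that one cannot simply quote Prop.~\ref{cor: complete are inflationary}: accelerated expansion, $\ddot a>0$, is equivalent to $\dot H>-H^2$, which is strictly weaker than NEC violation, $\dot H>0$, so a priori a geodesically complete model could accelerate while keeping $\dot H\le 0$ everywhere. The sharpening to NEC violation really uses the GR relation $\rho+p=-2\dot H$ together with the monotonicity argument above, not just the geometric content of Prop.~\ref{cor: complete are inflationary}. A secondary care point is the reduction of the flat-slicing LED criterion to the pure $\int a\,dt$ condition; once that is in place the estimate is elementary, and it is worth noting that the argument uses only null completeness, so the conclusion is robust and does not lean on the timelike or spacelike parts of Thm.~\ref{ledthm}.
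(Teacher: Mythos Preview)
Your argument is correct and, notably, goes beyond what the paper itself establishes: in the paper this statement is explicitly labeled a \emph{Conjecture} and no proof is given. The paper offers only a heuristic plausibility remark, observing that a bounce (a transition from $H<0$ to $H>0$) forces $\dot H>0$ at the bounce point and hence NEC violation there, and loosely invoking Prop.~\ref{cor: complete are inflationary}. Your approach is genuinely stronger and complete: by translating the NEC into the global monotonicity $\dot H\le 0$, and then showing that a non-increasing $H$ with any nonzero value forces exponential decay of $a$ on one temporal tail --- hence convergence of $\int a\,dt$ and null incompleteness via Thm.~\ref{ledthm} --- you cover all cases directly without ever needing a bounce. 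Your observation that Prop.~\ref{cor: complete are inflationary} alone cannot close the argument (since $\ddot a>0$ is only $\dot H>-H^2$, strictly weaker than $\dot H>0$) is exactly the point, and explains why the paper stops at ``conjecture.'' One cosmetic note: the null criterion $\int a\,dt$ in Thm.~\ref{ledthm} is already the bare integral for \emph{any} GFRW, not just the flat case; the $\sqrt{a^2+1}$ form is the separate timelike criterion, so your parenthetical about ``reducing'' to the pure $\int a\,dt$ form in flat slicing is unnecessary --- but this is purely expository and does not touch the logic.
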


This hypothesis is not entirely surprising given the well-known fact that the NEC must be violated in order to achieve a cosmological bounce (in a flat $k=0$) FRW spacetime \cite{Hawking:1970zqf}. During a bounce, the universe transitions from a contracting phase ($H<0$) to an expanding phase ($H>0$). This transition inherently requires that at the point where the contraction halts and expansion begins, the derivative of the Hubble parameter, $\dot H$, must be positive.
Since $\dot H = - 4 \pi G (\rho + p)$, we must have $\rho + p<0$, signaling violation of the NEC. Thus, any such spacetime which exhibits a bounce, or bounces, for any part of it's history must violate NEC, and since SEC implies the NEC,  its infringement is inevitable, thereby permitting the possibility of accelerated expansion.~\footnote{With non-zero curvature $k$, one can produce a bounce without violating NEC since, $\dot H = - 4 \pi G (\rho + p) + k/a^2$, can become positive at the bounce due to $k$. Since $\ddot a/a = H^2 + \dot H = - 4 \pi G (\rho/3 + p)$, and $H=0$ at the bounce, $\ddot a/a>0$ and the SEC is violated. Such curvature bounces are discussed in \cite{1978SvAL....4...82S,Graham:2011nb, Burwig:2025hrr}.}

We may further surmise:
\begin{conj} \label{conj: gfrw complete bounce inflate}
Every geodesically complete eternal spacetime which admits a neighborhood isometric to a GFRW with a non-constant scale factor will inflate for some time and bounce at least once, where said bounce may be at infinity.
\end{conj}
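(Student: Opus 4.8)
The plan is to prove Conjecture~\ref{conj: gfrw complete bounce inflate} by combining Proposition~\ref{cor: complete are inflationary} with a careful analysis of the LED integral conditions of Thm.~\ref{ledthm}. First I would reduce to the local picture: by hypothesis there is an open set $U$ of the spacetime isometric to a GFRW $\mathbb{R}\times N$ with non-constant warping function $a(t)$ defined on some interval $I\subseteq\mathbb{R}$. Since the full spacetime is geodesically complete, any maximal timelike or null geodesic entering $U$ must extend to all affine parameters; restricting to geodesics that remain in $U$ long enough, the LED divergence criteria (items 1--2 of Thm.~\ref{ledthm}) must be inherited on $I$, so in particular $\int^{t}a(\zeta)\,d\zeta$ and $\int^t a/\sqrt{a^2+1}\,d\zeta$ diverge at both ends of $I$. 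Proposition~\ref{cor: complete are inflationary} then immediately supplies the ``inflate for some time'' clause: a geodesically complete GFRW with non-constant $a$ has $\ddot a>0$ on a nonempty interval.

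The substantive new content is the ``bounce at least once, possibly at infinity'' clause, so the next step is to pin down what ``bounce at infinity'' should mean and then to do a case analysis on the sign behavior of $\dot a$. Define a bounce as a time $t_\ast$ (allowing $t_\ast=\pm\infty$) at which $\dot a$ changes sign from $\le 0$ to $\ge 0$, i.e.\ a transition from non-expansion to expansion; a ``bounce at infinity'' is the degenerate case where $\dot a\to 0^-$ or the contracting/expanding regimes are separated only in the limit (as in the $\alpha<0$ branch of the ``plus~$c$'' model, where $a$ is monotone but $\ddot a\to 0$). I would argue as follows: if $a$ is eventually monotone increasing toward the future and eventually monotone increasing toward the past, then because $a$ is non-constant and smooth it must have had an interior minimum or an asymptotic infimum, and in either case one locates a (possibly limiting) sign change of $\dot a$ — that is the bounce. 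If instead $a$ is eventually decreasing in the future, then future null completeness forces $\int^\infty a\,dt=\infty$, which rules out $a\to 0$ and in fact forces $a$ to be bounded below by a positive constant; combined with $\ddot a>0$ somewhere, monotonicity must fail, again producing a bounce. The past direction is handled symmetrically using past null/timelike completeness. The Einstein--static past limit $a\to c$ of the explicit model is exactly the ``bounce at infinity'' endpoint of this dichotomy, which is the sanity check I would display.

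The main obstacle I expect is making the ``bounce at infinity'' notion precise enough that the statement is both true and non-vacuous: without care, any monotone $a$ with $\dot a\to 0$ could be declared to ``bounce at infinity'' trivially, which would trivialize the conjecture; but demanding an honest sign change in the finite domain is too strong, since the explicit model's $\alpha>0$ branch never has $\dot a\le 0$ at all. The resolution I would pursue is to phrase the bounce in terms of the comoving Hubble radius $1/(a|H|)$ or equivalently in terms of $a$ attaining (or approaching) its global infimum: a geodesically complete non-constant GFRW either has $\inf a$ attained at an interior point (genuine bounce) or approached only as $t\to\pm\infty$ (bounce at infinity), and the completeness integrals prevent the third logical possibility, $\inf a=0$ approached in finite proper time. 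A secondary technical point is justifying that the local GFRW structure on $U$ propagates the relevant integral divergences to the whole real line — this requires noting that a GFRW neighborhood of a geodesically complete spacetime is itself foliated by complete spatial slices and that the warping interval $I$ must be all of $\mathbb{R}$, else a slice-orthogonal geodesic would leave $U$ and one would need an extension argument across $\partial U$; I would either assume $I=\mathbb{R}$ as part of ``admits a neighborhood isometric to a GFRW'' or handle the extension via the exponential-map argument already invoked after Thm.~\ref{thm:bgv}.
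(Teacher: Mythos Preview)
The paper does not prove this statement: it is explicitly presented as a \emph{Conjecture}, with no accompanying proof, only a paragraph of heuristic motivation. There is therefore nothing to compare your proposal against; you are attempting to supply an argument the authors deliberately withheld.

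As an attempted proof, your plan has a genuine structural gap at exactly the point you label ``secondary.'' The hypothesis grants only that some open set $U\subset\mathcal{M}$ is isometric to a GFRW patch $I\times_a N$; it does not say $I=\mathbb{R}$, nor that $\mathcal{M}$ is globally a GFRW. Theorem~\ref{ledthm} is a statement about full GFRW spacetimes, and its integral criteria are phrased over $(-\infty,t_0]$ and $[t_0,\infty)$. Your claim that geodesic completeness of $\mathcal{M}$ forces the LED divergences on $I$ is not justified: a slice-orthogonal causal geodesic that exits $U$ continues in $\mathcal{M}$, but outside $U$ there is no scale factor to integrate, so nothing forces $\int a$ or $\int a/\sqrt{a^2+1}$ to diverge on $I$. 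The exponential-map remark after Thm.~\ref{thm:bgv} only tells you the geodesic extends in $\mathcal{M}$, not that the GFRW structure does. Absent that, neither Proposition~\ref{cor: complete are inflationary} (which assumes a geodesically complete GFRW) nor your subsequent case analysis on $\dot a$ is available.

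The second gap you already identify yourself: the ``bounce at infinity'' clause is doing a lot of work, and you have not fixed a definition that is simultaneously (a) satisfied by the $\alpha>0$ ``plus~$c$'' model, where $\dot a>0$ everywhere and no sign change of $\dot a$ ever occurs, even asymptotically, and (b) not so loose that every monotone $a$ with $\dot a\to 0$ qualifies. Your proposed reformulation via $\inf a$ is closer, but note that ``$\inf a$ is approached only as $t\to\pm\infty$'' is a statement about a global infimum, which again presupposes $I=\mathbb{R}$. These two issues together are plausibly why the authors state this as a conjecture rather than a proposition.
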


Importantly, this perspective reveals that cosmological bounces do not, by themselves, resolve the geodesic incompleteness of inflationary models highlighted by the BGV theorem (Thm.~\ref{thm:bgv}). Rather, geodesic completeness in reasonable nontrivial cosmological spacetimes appears to require both a bounce and a period of inflationary expansion. These are not mutually exclusive phenomena but instead represent complementary features of nonsingular, complete cosmic histories.

\emph{Conclusions}--
Thm. \ref{ledthm}, derived from the direct integration of the geodesic equations, yields a definitive and rigorous criterion for establishing geodesic completeness in FRW (GFRW) spacetimes. Unlike approaches based on averaged quantities such as \( H \), \( H_{\mathrm{avg}} \), or the BGV Thm.~\ref{thm:bgv}, this method furnishes unambiguous and concrete conditions applicable across a wide range of geometries, including nonsingular and eternally inflating models.

Considering the above findings within the context of the incompleteness arguments presented in \cite{Borde:2001nh}, we arrive at a compelling shuffling of logic: The issue is not that inflationary spacetimes are necessarily incomplete; instead, we find that for nonstatic spacetimes to be complete, they must exhibit inflationary behavior. Within classical GR, and with the caveat with respect to traditional energy conditions, accelerated expansion and inflation  play a critical role in resolving the initial singularity problem and, modulo quantum effects, we have shown inflation can be eternal into the past. 

\emph{Acknowledgments}--It is a pleasure to thank R. Brandenberger, P. Davies, G. Ellis, G. Geshnizjani, J. Hua, W. Kinney, B. Kotschwar, A.~Vikman and S. Watson for useful discussions and correspondence. DAE is supported in part by the U.S. Department of Energy, Office of High Energy Physics, under Award Number DE-SC0019470.   JEL is supported by the gracious patronage of Ms. Deborah L Nelson, Dr. Edward J Lesnefsky, and Mrs. Laken S Lesnefsky.

\appendix
\section{Appendix: Controlled SNEC Violation}\label{appsnecplusc}

In this appendix we analyze the closed FRW model with $a(t)=c+a_0 e^{2t/\alpha}$ and show that, while $T_{kk}$ becomes negative at late times, the violation of the smeared null energy condition is uniformly bounded. Moreover, any widening smearing restores positivity, while the ANEC is satisfied in the strongest sense.

%\section{Appendix}\label{appendix:snecplusc}
\begin{lem}[Pointwise bounds of $T_{kk}$ for the $k=1$ ``$+c$'' model]
Consider \eq{aplusc}, with $c>0$ and $\alpha\neq 0$, and take
affinely parametrized radial null geodesic with $k^\mu$ normalized so that
$k^t=1/a$. Then
\begin{equation}
T_{kk}(t)\equiv T_{\mu\nu}k^\mu k^\nu
=\frac{\rho+p}{a^2}
=\frac{\,2-\dfrac{8cX}{\alpha^2}\,}{(c+X)^4}\, 
\end{equation}
where $X\equiv a_0 e^{2t/\alpha}.$
$T_{kk}$ has a unique global minimum attained at 
$X_\star=(\alpha^2+c^2)/(3c)$, and
\begin{equation}
T_{kk}^{\min}
= T_{kk}(X_\star)
=-\frac{54\,c^{4}}{\alpha^{2}\!\left(\alpha^{2}+4c^{2}\right)^{3}}\,.
\end{equation}
Moreover,
\begin{equation}
\lim_{t\to-\infty}T_{kk}(t)=\frac{2}{c^4}>0,\qquad
\lim_{t\to+\infty}T_{kk}(t)=0^{-}.
\end{equation}
\end{lem}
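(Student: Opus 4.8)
The plan is to verify the closed-form for $T_{kk}$ directly, then reduce the extremization to a single-variable calculus problem in $X$, and finally read off the two asymptotic limits.

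\textbf{Step 1: the formula for $T_{kk}$.} From \eq{etplusc} we have $\rho = -G^t{}_t$ and $p = G^i{}_i$, so $\rho+p$ follows by combining the two displayed components (recalling that $G^i{}_i$ carries a $g_{ii}$ factor that is cancelled when we raise an index). With $X\equiv a_0 e^{2t/\alpha}$ and $a=c+X$, a short algebraic simplification of $G_{tt}+G^i{}_i$-type combination gives $\rho+p = \bigl(2 - 8cX/\alpha^2\bigr)/(c+X)^2$; dividing by $a^2=(c+X)^2$ (the null normalization $k^t=1/a$ contributes the $1/a^2$) yields the stated $T_{kk}=(2-8cX/\alpha^2)/(c+X)^4$. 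This is the routine part; I would just record the intermediate expression for $\rho+p$ and cite \eq{etplusc}.

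\textbf{Step 2: the global minimum.} Treat $T_{kk}$ as a function of $X\in(0,\infty)$ (a strictly monotone, smooth reparametrization of $t\in\mathbb{R}$, since $\alpha\neq 0$ makes $X(t)$ a bijection onto $(0,\infty)$). Write $T_{kk}(X) = N(X)/(c+X)^4$ with $N(X) = 2 - 8cX/\alpha^2$. Differentiate: $T_{kk}'(X)$ has numerator $N'(X)(c+X) - 4N(X)$ over $(c+X)^5$. Setting the numerator to zero gives a \emph{linear} equation in $X$ (the $X^2$ terms cancel because $N$ is affine), whose unique solution I expect to be $X_\star = (\alpha^2+c^2)/(3c)$. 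Since $X_\star>0$, it lies in the domain. To confirm it is the global minimum rather than a maximum or inflection, note $T_{kk}\to 2/c^4>0$ as $X\to 0^+$ and $T_{kk}\to 0^-$ as $X\to\infty$, and $T_{kk}$ is eventually negative (for $X>\alpha^2/(4c)$), so the single interior critical point must be the global minimum. Substituting $X_\star$ back and simplifying — here one uses $c+X_\star = (\alpha^2+4c^2)/(3c)$ — gives $T_{kk}^{\min} = -54c^4/\bigl(\alpha^2(\alpha^2+4c^2)^3\bigr)$.

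\textbf{Step 3: the asymptotics.} As $t\to-\infty$, $X\to 0$, so $T_{kk}\to 2/c^4>0$ immediately. As $t\to+\infty$, $X\to\infty$; the numerator $\sim -8cX/\alpha^2$ grows linearly while the denominator $\sim X^4$, so $T_{kk}\sim -8c/(\alpha^2 X^3)\to 0$ through negative values, i.e. $0^-$.

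\textbf{Main obstacle.} None of the steps is deep; the only place care is needed is the algebraic simplification in Steps 1 and 2 — getting $\rho+p$ into the compact form $2-8cX/\alpha^2$ over $(c+X)^2$ requires correctly handling the index placement in $G^i{}_i = G_{ii}/g_{ii}$ and combining fractions, and then back-substituting $X_\star$ into a quartic-looking expression that collapses thanks to the identity $c+X_\star=(\alpha^2+4c^2)/(3c)$. I would double-check the critical-point computation by verifying that $T_{kk}'$ indeed has the claimed linear numerator (the cancellation of the quadratic term is the structural reason a unique minimum exists) and by confirming the sign of $T_{kk}''(X_\star)$ or, more simply, invoking the boundary-value argument above, which sidesteps the second-derivative test entirely.
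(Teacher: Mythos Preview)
Your proposal is correct and follows essentially the same route as the paper: compute $\rho+p$ (the paper does it via $-2\dot H+2/a^2$, you via the components in \eq{etplusc}, which is the same calculation), divide by $a^2$, then do single-variable calculus in $X$ to locate the unique critical point and evaluate. The only cosmetic difference is that the paper confirms the minimum by the sign change of the derivative numerator $8(3cX-\alpha^2-c^2)$, whereas you invoke the boundary values; both arguments are equivalent and your observation that $c+X_\star=(\alpha^2+4c^2)/(3c)$ is exactly the simplification needed for the final value.
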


\begin{proof}
Using $H=\dot a/a= 2X\big/\!\bigl[\alpha(c+X)\bigr]$ and 
$\dot H=4cX\big/\!\bigl[\alpha^2(c+X)^2\bigr]$, one has 
$\rho+p=-2\dot H+2/a^2=\bigl(2-8cX/\alpha^2\bigr)/(c+X)^2$.
With $k^t=1/a$ this yields the stated $T_{kk}=(\rho+p)/a^2$.
Differentiate with respect to $X$:
\[
\frac{dT_{kk}}{dX}
=\frac{8\,\bigl(3cX-\alpha^2-c^2\bigr)}{\alpha^2\,(c+X)^5},
\]
so the unique critical point is $X_\star=(\alpha^2+c^2)/(3c)$. It is a
global minimum (denominator $>0$, numerator changes sign from negative to
positive). Substituting $X_\star$ gives the quoted $T_{kk}^{\min}<0$.
The limits follow from $X\to 0$ as $t\to-\infty$ and $X\to\infty$ as
$t\to+\infty$.
\end{proof}

\begin{prop}[Uniform bound for smeared averages (controlled SNEC violation)]
Let $f\in C^\infty_c(\mathbb{R})$ be nonnegative with $\int f(\lambda)\,d\lambda=1$,
and $\lambda$ the affine parameter along the null geodesic above. Then
\begin{equation}
\int_{-\infty}^{\infty} f(\lambda)\,T_{kk}(\lambda)\,d\lambda
\;\ge\; T_{kk}^{\min}
=-\frac{54\,c^{4}}{\alpha^{2}\!\left(\alpha^{2}+4c^{2}\right)^{3}}\,.
\end{equation}
In particular, any SNEC violation is quantitatively bounded from below by a
finite model-dependent constant.
\end{prop}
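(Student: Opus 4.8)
The plan is to obtain the smeared bound directly from the pointwise data already furnished by the preceding Lemma, so that essentially no new computation is needed. The Lemma establishes that, viewed as a function of $t$ (equivalently of $X=a_0 e^{2t/\alpha}\in(0,\infty)$), the scalar $T_{kk}$ has a unique global minimum $T_{kk}^{\min}=-54\,c^{4}/[\alpha^{2}(\alpha^{2}+4c^{2})^{3}]$, hence $T_{kk}(t)\ge T_{kk}^{\min}$ for every $t\in\mathbb{R}$. First I would note that $T_{kk}=T_{\mu\nu}k^\mu k^\nu$ is a spacetime scalar evaluated along the fixed radial null geodesic, so its value at a point is independent of the parametrization; along the affinely parametrized ray of the Lemma (with $k^t=1/a$) the map $\lambda\mapsto t(\lambda)$ is a smooth monotone reparametrization, and since the spacetime is geodesically complete by Thm.~\ref{ledthm}, $\lambda$ — and therefore $t(\lambda)$ — ranges over all of $\mathbb{R}$. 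Thus the pointwise bound transports verbatim: $T_{kk}(\lambda)\ge T_{kk}^{\min}$ for all $\lambda$.

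Next, using $f\ge 0$, multiply this inequality by $f(\lambda)$ to get $f(\lambda)\,T_{kk}(\lambda)\ge T_{kk}^{\min}\,f(\lambda)$ and integrate over $\mathbb{R}$. Since $\int f(\lambda)\,d\lambda=1$, the right-hand side integrates to $T_{kk}^{\min}$, yielding
\begin{equation}
\int_{-\infty}^{\infty} f(\lambda)\,T_{kk}(\lambda)\,d\lambda\;\ge\; T_{kk}^{\min}
=-\frac{54\,c^{4}}{\alpha^{2}\!\left(\alpha^{2}+4c^{2}\right)^{3}}\,,
\end{equation}
which is the assertion; the bound is manifestly finite and depends only on the model parameters $c,\alpha$, so any SNEC violation along this geodesic is quantitatively controlled, independently of the choice of sampling function $f$ or its affine width $\ell$.

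There is no genuine obstacle here: the entire content lives in the Lemma's identification of the global minimum (the sign analysis of $dT_{kk}/dX$ and substitution at $X_\star$), which is routine calculus already carried out. The only point meriting a sentence of care is the reparametrization remark — verifying that the affine parameter covers all of $\mathbb{R}$ so that the infimum over $t$ indeed bounds the integrand over the whole support of $f$ — and this is immediate from the geodesic completeness established earlier. If one wished to complement the statement, I would add that a normalized smearing window centered in the NEC-violating future region and then widened sees $T_{kk}\to 0^{-}$ uniformly on its support, so the smeared average is driven back up toward nonnegative values, consistent with the strong satisfaction of the ANEC; but this refinement is not needed for the bound itself.
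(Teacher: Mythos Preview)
Your proof is correct and follows exactly the same approach as the paper: use the pointwise inequality $T_{kk}\ge T_{kk}^{\min}$ from the preceding Lemma, multiply by the nonnegative normalized smearing $f$, and integrate. Your added remark that the affine parameter ranges over all of $\mathbb{R}$ (so the global infimum in $t$ really bounds the integrand everywhere on the support of $f$) is a welcome clarification, but otherwise the argument is identical to the paper's one-line proof.
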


\begin{proof}
Since $f\ge 0$ and $\int f=1$,\\
$\int f\,T_{kk}\ge \bigl(\inf T_{kk}\bigr)\int f = T_{kk}^{\min}$.
\end{proof}

\begin{prop}[Positivity for widening compactly supported smearings]
\label{prop:widening-positive}
Fix a center $\lambda_0\in\mathbb{R}$ and a nonnegative profile
$\phi\in C^\infty_c([-1,1])$ with $\int_{-1}^{1}\phi=1$ and
\begin{equation}\label{eq:left-mass}
\int_{-1}^{0}\phi(u)\,du \;>\; 0 \qquad\text{(nonzero past weight).}
\end{equation}
For $\ell>0$ set $f_\ell(\lambda)=\ell^{-1}\phi\!\big((\lambda-\lambda_0)/\ell\big)$.
Then
\begin{equation}\label{eq:dcv-limit}
\lim_{\ell\to\infty}\int_{-\infty}^{\infty} f_\ell(\lambda)\,T_{kk}(\lambda)\,d\lambda
\;=\; \frac{2}{c^4}\int_{-1}^{0}\phi(u)\,du \;\ge\;0\,.
\end{equation}
In particular, there exists $\ell_\star=\ell_\star(\lambda_0,\phi,c,\alpha)$ such that
\begin{equation}
\int f_\ell(\lambda)\,T_{kk}(\lambda)\,d\lambda\;\ge\;0
\qquad\text{for all }\ \ell\ge \ell_\star.
\end{equation}
\end{prop}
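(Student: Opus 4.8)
The plan is to reduce the statement to a dominated-convergence argument after passing from the affine parameter $\lambda$ to cosmic time $t$. First I would record the relation between $\lambda$ and $t$ along the radial null geodesic: since $k^{t}=dt/d\lambda=1/a$, one has $d\lambda=a(t)\,dt$, so $\lambda(t)=\int^{t}a(\zeta)\,d\zeta=ct+\tfrac{a_{0}\alpha}{2}e^{2t/\alpha}+\text{const}$, the antiderivative already computed in the main text. Because $a>0$ the map $\lambda(\cdot)$ is strictly increasing, and because $\int_{t_{0}}^{\pm\infty}a\,dt$ diverges in both directions (the geodesic-completeness input for this model established above), $\lambda:\mathbb{R}\to\mathbb{R}$ is an increasing bijection; hence $\lambda\to\pm\infty$ if and only if $t\to\pm\infty$. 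Fixing the expanding convention $\alpha>0$, so that $X=a_{0}e^{2t/\alpha}\to0$ as $t\to-\infty$ and $X\to\infty$ as $t\to+\infty$, the Lemma supplies $T_{kk}\to 2/c^{4}$ at the past end and $T_{kk}\to 0$ at the future end, together with the fact that $T_{kk}$ is continuous with finite limits at $\pm\infty$, hence bounded: $|T_{kk}|\le M:=\max\!\bigl(2/c^{4},\,|T_{kk}^{\min}|\bigr)<\infty$.

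Next I would rescale the smearing. Substituting $u=(\lambda-\lambda_{0})/\ell$ gives
\begin{equation}
\int_{-\infty}^{\infty} f_{\ell}(\lambda)\,T_{kk}(\lambda)\,d\lambda=\int_{-1}^{1}\phi(u)\,T_{kk}\!\bigl(\lambda_{0}+\ell u\bigr)\,du .
\end{equation}
For each fixed $u\in(0,1]$ one has $\lambda_{0}+\ell u\to+\infty$ as $\ell\to\infty$, and for each fixed $u\in[-1,0)$ one has $\lambda_{0}+\ell u\to-\infty$; through the bijection this means $t\to+\infty$, resp.\ $t\to-\infty$, so the integrand converges pointwise (off the single point $u=0$, a Lebesgue-null set) to $0$ on $(0,1]$ and to $2/c^{4}$ on $[-1,0)$. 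Since $M\phi\in L^{1}([-1,1])$ dominates the integrand uniformly in $\ell$, the dominated convergence theorem yields
\begin{equation}
\lim_{\ell\to\infty}\int_{-1}^{1}\phi(u)\,T_{kk}\!\bigl(\lambda_{0}+\ell u\bigr)\,du=\frac{2}{c^{4}}\int_{-1}^{0}\phi(u)\,du ,
\end{equation}
which is precisely \eqref{eq:dcv-limit}, and it is $\ge0$ because $\phi\ge0$.

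Finally I would extract $\ell_{\star}$. Under hypothesis \eqref{eq:left-mass} the limit $L:=\tfrac{2}{c^{4}}\int_{-1}^{0}\phi$ is strictly positive; applying the definition of the limit with tolerance $L/2$ produces $\ell_{\star}=\ell_{\star}(\lambda_{0},\phi,c,\alpha)$ such that
\begin{equation}
\int_{-\infty}^{\infty} f_{\ell}(\lambda)\,T_{kk}(\lambda)\,d\lambda>\tfrac{L}{2}>0\ge 0\qquad\text{for all }\ell\ge\ell_{\star},
\end{equation}
which is the claimed sign statement.

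As for the main obstacle: there is no deep difficulty, the content being a change of variables plus dominated convergence, but the step that genuinely requires care is the equivalence $\lambda\to\pm\infty\Leftrightarrow t\to\pm\infty$. This relies on $\lambda(t)$ being a proper bijection of $\mathbb{R}$---i.e.\ on the divergence of $\int a\,dt$ in \emph{both} time directions, not merely on $a>0$---and on fixing $\alpha>0$ so that the past and future ends of the geodesic correspond to $X\to0$ and $X\to\infty$ as used in the Lemma; for $\alpha<0$ the two tails, and hence the roles of $2/c^{4}$ and $0$ in the limiting average, are interchanged, and the hypothesis \eqref{eq:left-mass} would have to be replaced by a nonzero \emph{future}-weight condition.
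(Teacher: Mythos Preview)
Your proof is correct and follows essentially the same route as the paper: the substitution $u=(\lambda-\lambda_0)/\ell$, pointwise convergence of $T_{kk}(\lambda_0+\ell u)$ to $2/c^4$ for $u<0$ and to $0$ for $u>0$, and dominated convergence via the uniform bound on $T_{kk}$. Your version is in fact more careful than the paper's, since you make explicit the $\lambda\leftrightarrow t$ bijection (which relies on null completeness in both directions) and the sign convention $\alpha>0$, both of which the paper invokes only implicitly when passing from the $t$-limits of the Lemma to $\lambda$-limits.
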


\begin{proof}
Change variables $\lambda=\lambda_0+u\ell$. Since $f_\ell(\lambda)\,d\lambda=\phi(u)\,du$, we have
\begin{equation}
\int f_\ell T_{kk} \;=\; \int_{-1}^{1} \phi(u)\,T_{kk}\!\big(\lambda_0+u\ell\big)\,du.
\end{equation}
By Lemma~1, $T_{kk}(\lambda)\to 2/c^4$ as $\lambda\to-\infty$ and $T_{kk}(\lambda)\to 0^-$ as $\lambda\to+\infty$, while $T_{kk}$ is bounded below by $T_{kk}^{\min}$ and above by $2/c^4$. Hence for each fixed $u<0$, $\lambda_0+u\ell\to-\infty$ and $T_{kk}\!\big(\lambda_0+u\ell\big)\to 2/c^4$; for $u>0$, $\lambda_0+u\ell\to+\infty$ and $T_{kk}\!\big(\lambda_0+u\ell\big)\to 0^-$. The integrand is dominated by an $L^1$ function independent of $\ell$, so dominated convergence applies and yields \eqref{eq:dcv-limit}. The tail positivity then follows for all sufficiently large $\ell$.
\end{proof}

\begin{rmk}
Because $T_{kk}(t)<0$ for all sufficiently late times (indeed for $t>t_0$, where
$X=a_0 e^{2t/\alpha}>\alpha^2/(4c)$, i.e.\ $t_0=\frac{\alpha}{2}\ln\!\frac{\alpha^2}{4ca_0}$), the SNEC is violated: one can choose $f$ supported entirely where $T_{kk}<0$. The results
above show that this violation is controlled: $T_{kk}$ is bounded
below by the explicit constant $T_{kk}^{\min}$, and any standard widening
family of compactly supported smearings becomes nonnegative once the window
includes a sufficient portion of the positive past tail. Meanwhile, the ANEC
holds in the strongest sense since $T_{kk}\to 2/c^4$ as $\lambda\to-\infty$
and $d\lambda\sim a\,dt\sim c\,dt$ there, so 
$\int T_{kk}\,d\lambda=+\infty$.
\end{rmk}

\end{document}